\newcommand{\ignore}[1]{}
\newcounter{obs}
\definecolor{back-high}{rgb}{1,1,0.6}
\definecolor{back}{rgb}{1,1,.5}
\definecolor{nc}{rgb}{.38,.5,.9}
\definecolor{nc2}{rgb}{.3,.8,.3}
\definecolor{db}{rgb}{0,0,.3}
\definecolor{dr}{rgb}{1,0.5,0}
\definecolor{bg}{rgb}{1,.5,.2}
\definecolor{bg2}{rgb}{1,1,.5}
\definecolor{bgs}{rgb}{.5,1,.5}
\definecolor{mc1}{rgb}{.7,.7,1}
\definecolor{mc2}{rgb}{.8,.8,1}
\definecolor{bgslight}{rgb}{.8,1,.8}
\newtheorem{theorem}{Theorem}[section]
\newtheorem{claim}{Claim}
\newtheorem{corollary}{Corollary}
\newtheorem{lemma}[theorem]{Lemma}
\newenvironment{proof}{\noindent{\bf Proof}:}{$\hfill \Box$\\}
\newenvironment{proofof}[1]{\noindent{\bf Proof of #1}:}{$\hfill \Box$\\}
\newcommand{\Z}{\mathbb{Z}}
\newcommand{\R}{\mathbb{R}}
\newcommand{\one}{{\bf{1}}}
\title{Integrality Gap of the Vertex Cover Linear Programming Relaxation}
\author{
Mohit Singh\thanks{Groseclose 410,H. Milton Stewart School of Industrial and Systems Engineering, 755 Ferst Avenue, Atlanta GA.  Email:mohit.singh@isye.gatech.edu.} \\
H. Milton Stewart School of Industrial and Systems Engineering, \\
Georgia Institute of Technology, Atlanta, GA.
}
\begin{document}

\date{}
\maketitle

\begin{abstract}
We give a characterization result for the integrality gap of the
natural linear programming relaxation for the vertex cover
problem. We show that integrality gap of the standard linear programming relaxation for any graph $G$ {equals} $\left(2-\frac{2}{\chi^f(G)}\right)$ where $\chi^f(G)$ denotes the fractional chromatic number of $G$.

\end{abstract}
\textbf{Keywords:}  Vertex Cover, Linear Programming, Integrality Gap, Chromatic Number.

\section{Introduction}

Given a vertex-weighted graph $G=(V,E)$ with weights
$c:V\rightarrow \mathbb{R}_+$, the vertex cover problem asks for a
minimum weight subset of vertices $S\subseteq V$ such that each edge $e\in
E$ has at least one endpoint in $S$. The vertex cover problem is
NP-hard~\cite{karp} and a 2-approximation algorithm was given by Nemhauser and Trotter~\cite{nt}.

Mathematical programming techniques have been extensively studied
for the vertex cover problem. The following is a natural integer
linear program for the vertex cover problem. We denote by $IP(G)$ the convex hull of all integer feasible vertex covers which is
the convex hull of the feasible solutions to following the integer linear
program.

\begin{eqnarray*}
\label{ip1} \min \hspace{1ex} c^T x &=& \sum\limits_{v\in V} c_v\,x_v \\
\textrm{s.t.} \hspace{5ex} x_u+x_v &\geq& 1 \hspace{14ex} \forall \,
\{u,v\}\in E\\
\label{int1}   x &\in&\Z_+^V
\end{eqnarray*}

Relaxing the integrality constraints we obtain a linear program.
We denote $P(G)$ to be the polyhedron of all feasible solutions to
the following linear programming relaxation to the vertex cover
problem.

\begin{eqnarray*}
\label{lp1} \min \hspace{1ex} c^T x &=&\sum\limits_{v\in V} c_v\,x_v \\
\textrm{ s.t. } \hspace{5ex} x_u+x_v &\geq& 1 \hspace{14ex} \forall \,
\{u,v\}\in E\\
\label{relax}   x_v &\geq& 0  \hspace{15ex} \forall \, v\in V
\end{eqnarray*}

The polyhedron $P(G)$ has been studied extensively. It is known that $P(G)$ is integral if and
only if $G$ is bipartite. This follows as a corollary of the Hungarian method~\cite{kuhn}. Nemhauser and Trotter~\cite{nt} show
that $P(G)$ is half-integral, i.e., for any extreme point $x$
of $P(G)$, we must have $x_v\in \{0,\frac12,1\}$ for each $v\in
V$. This result
directly implies a 2-approximation for the problem by selecting
each vertex with non-zero value in an optimal solution to the linear program.
Surprisingly, this long standing result has not been improved
despite extensive study and is optimal assuming the Unique Games Conjecture~\cite{KhotR}. Moreover, there are no polynomial sized linear programs that approximate the value of the vertex cover better than factor of $2-\epsilon$~\cite{AFPS}.

In this paper, we study the polyhedron $P(G)$ and give a complete
characterization of the integrality gap of the linear program. Our
characterization will also point out which instances are harder
for the vertex cover problem. Surprisingly, we obtain a strong
connection between the integrality gap of the linear program and
another parameter of the graph known as fractional chromatic
number of the graph.

\subsection{Definitions and Preliminaries}

The \emph{integrality gap} of vertex cover LP relaxation is
defined as \[\rho(G)=\max_{c:V\rightarrow \R_+} \frac{\min \{c^Tx:x\in
IP(G)\}}{\min \{c^Tx: x\in P(G)\}}\].

We also use the following result which is folklore and also occurs
explicitly in Goemans~\cite{goemansstrength}. A polyhedron $P\subseteq \R^n_+$ is called blocking type if for any $x\in P$ and $y\geq x$, we have $y\in P$.

\begin{lemma}~\cite{goemansstrength}\label{goemanslem}
Given a blocking type polyhedron $Q$ and its relaxation $P$, the
integrality gap of this relaxation \[\max_{c:V\rightarrow \R_+}
\left\{\frac{\min \left\{c^Tx:x\in Q\right\}}{\min \left\{c^Tx: x\in P\right\}}\right\} =
 \min\left\{\rho\geq 0: \rho \cdot x\in Q| \forall x\in P\right\}.\]
   \end{lemma}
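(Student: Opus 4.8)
The plan is to prove the two inequalities separately. Write $\rho^\star$ for the right-hand side, and for a cost $c\geq 0$ abbreviate $f(c)=\min\{c^Tx:x\in P\}$ and $g(c)=\min\{c^Tx:x\in Q\}$, so that the quantity to be computed is $\sup_{c\geq 0} g(c)/f(c)$. Both minima are attained at extreme points because $P$ and $Q$ are of blocking type (their recession cone is $\R^n_+$) and $c\geq 0$, so the objectives are bounded below.

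The easy direction is $\sup_{c\geq 0} g(c)/f(c)\leq \rho^\star$. First I would fix $c\geq 0$ and let $x^\star$ attain $f(c)$. By the very definition of $\rho^\star$ we have $\rho^\star x^\star\in Q$, hence $g(c)\leq c^T(\rho^\star x^\star)=\rho^\star f(c)$. Dividing by $f(c)$ and taking the supremum over $c$ finishes this direction; it is essentially bookkeeping.

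The reverse direction $\sup_{c\geq 0} g(c)/f(c)\geq \rho^\star$ is where the real work lies, and my plan is to exhibit, for each $\epsilon>0$, a cost whose ratio exceeds $\rho^\star-\epsilon$. By minimality of $\rho^\star$ the scalar $\rho^\star-\epsilon$ fails, so there is a point $\bar x\in P$ with $(\rho^\star-\epsilon)\bar x\notin Q$. Since $Q$ is closed and convex, the separating hyperplane theorem yields $c$ and $\beta$ with $c^Ty\geq \beta > (\rho^\star-\epsilon)\,c^T\bar x$ for every $y\in Q$. The crucial observation is that $c$ may be taken nonnegative: if some coordinate $c_i<0$, then sending the $i$-th coordinate of any point of $Q$ to $+\infty$ keeps it in $Q$ by the blocking property while driving $c^Ty\to-\infty$, contradicting $c^Ty\geq\beta$. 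With $c\geq 0$ and $\bar x\in P$ (so $f(c)\leq c^T\bar x$) we obtain $g(c)\geq \beta > (\rho^\star-\epsilon)\,c^T\bar x \geq (\rho^\star-\epsilon)f(c)$, whence $g(c)/f(c)>\rho^\star-\epsilon$; letting $\epsilon\to 0$ gives the claim.

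I expect the main obstacle to be the degenerate bookkeeping around the division rather than the geometric idea. I need the separating normal to be nonnegative, which the blocking structure supplies as above, and I must guarantee $f(c)>0$ so the ratio is well defined; the chain $g(c)>(\rho^\star-\epsilon)c^T\bar x\geq 0$ already forces $c^T\bar x>0$ whenever $\rho^\star>0$, and one checks that on the relevant instances $f(c)>0$ as well (otherwise the ratio is $+\infty$, consistent with $\rho^\star=+\infty$). Finally, since the worst case can be placed at a vertex of $P$ and there are finitely many such vertices, I would note that the supremum over $c$ is in fact attained, which tightens the limiting argument into an exact equality.
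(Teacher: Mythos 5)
Your proof is correct, but there is nothing in the paper to compare it against: the paper never proves Lemma~\ref{goemanslem}, citing it as folklore that occurs explicitly in Goemans~\cite{goemansstrength}. Your argument --- the easy direction by scaling the minimizer of $f(c)$ into $Q$, and the hard direction by strictly separating $(\rho^\star-\epsilon)\bar x$ from $Q$ and using the blocking property to force the separating normal to be nonnegative --- is the standard proof of this fact, and the nonnegativity-of-$c$ step is exactly where the blocking structure earns its keep. Three points deserve tightening. First, you implicitly assume that $P$ is also of blocking type (or at least $P\subseteq\R^n_+$): you use $c^T\bar x\geq 0$ and attainment of both minima, and the lemma is actually false without such a hypothesis (take $n=1$, $Q=[1,\infty)$, $P=[-1,\infty)$, where the left side is negative but no $\rho\geq 0$ maps $P$ into $Q$); this hypothesis does hold in the paper's application since $P(G)\subseteq\R^V_+$ is up-monotone. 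Second, ``by minimality of $\rho^\star$'' presupposes that the minimum defining $\rho^\star$ is attained; this needs the one-line observation that $\{\rho\geq 0:\rho x\in Q\ \forall x\in P\}$ is up-closed (by up-monotonicity of $Q$ and $x\geq 0$) and closed (since $Q$ is closed), hence equal to $[\rho^\star,\infty)$ when nonempty. Third, your closing claim that the supremum over $c$ is attained does not follow merely from $P$ having finitely many vertices; the clean route is to note that $\rho^\star=\max_{v}\min\{\rho\geq 0:\rho v\in Q\}$ over the (finitely many) vertices $v$ of $P$, pick the critical vertex $v^\star$ so that $\rho^\star v^\star$ lies on the boundary of $Q$, and take a supporting hyperplane of $Q$ there; its normal $c$ is nonnegative by the same blocking argument and gives $g(c)\geq \rho^\star c^T v^\star\geq\rho^\star f(c)$, hence equality. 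With those patches (and the usual convention excluding costs with $f(c)=0$), your write-up is a complete and self-contained proof of the cited lemma.
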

Observe that, since $P$ is a relaxation of $Q$, i.e. $P\supseteq Q$,  we must have $\rho\geq 1$.

Given a graph $G=(V,E)$, let ${\cal U}(G)=\{U\subseteq V(G): U
\textrm{ is an independent set in G}\}$ be the set of all
independent sets in $G$. The fractional chromatic number
$\chi^f(G)$ is defined as the optimal value of the following
linear program.

\begin{eqnarray}
\label{eq:1}\min & \sum_{U\in {\cal U}(G)} y_U& \\
\label{eq:2}      s.t.&\sum_{U\in {\cal U}(G):v\in U} y_U \geq 1& \forall v \in V(G),\\
\label{eq:3}      &y_U\geq 0 &\forall U\in {\cal U}(G) .
\end{eqnarray}

Observe that $\chi^f(G)\leq \chi(G)$ where $\chi(G)$ is the
chromatic number of graph $G$. Moreover, $\chi^f(G)\geq 2$ unless $E(G)=\emptyset$. We assume that $E(G)\neq \emptyset$ for rest of the paper.

\section{Main Result}

We prove the following result relating the integrality gap of the
vertex cover linear programming relaxation and the fractional
chromatic number of $G$.

\begin{theorem}\label{thm:main1}
Given a graph $G=(V,E)$ the integrality gap of vertex cover LP
relaxation $\rho(G)=2-\frac{2}{\chi^f(G)}$.
\end{theorem}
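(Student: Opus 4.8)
The plan is to use Lemma~\ref{goemanslem} to reduce the integrality-gap computation to finding the smallest scaling factor $\rho$ such that $\rho\cdot x\in IP(G)$ for every $x\in P(G)$.Both $IP(G)$ and $P(G)$ are of blocking type, so the lemma applies with $Q=IP(G)$ and $P=P(G)$; and since $IP(G)$ is convex and closed under adding nonnegative vectors, I only need to certify $\rho\,x\in IP(G)$ at the extreme points of $P(G)$, which by the Nemhauser--Trotter half-integrality theorem~\cite{nt} satisfy $x_v\in\{0,\tfrac12,1\}$. A preliminary fact I would record is that $IP(G)$ is exactly the dominant of the $0/1$ vertex-cover polytope, i.e. $IP(G)=\text{conv}(\text{$0/1$ vertex covers})+\R^V_+$; consequently $y\in IP(G)$ if and only if there is a distribution $\{\lambda_S\}$ over vertex covers $S$ with $\E{\one_S}\le y$ coordinatewise, where $\E{\one_S}=\sum_S\lambda_S\one_S$.

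For the lower bound $\rho(G)\ge 2-\tfrac{2}{\chi^f(G)}$ I would test the single point $x=\tfrac12\,\one\in P(G)$. Membership of $\rho\cdot\tfrac12\,\one$ in $IP(G)$ asks for a distribution over vertex covers minimizing $\max_v \prob{v\in S}$. Passing to complements $U=V\setminus S$, which range over independent sets, this quantity equals $1-\max_{\lambda}\min_v\sum_{U\ni v}\lambda_U$, the maximum over distributions $\lambda$ on independent sets. The key identity is that this max--min value is precisely $1/\chi^f(G)$: scaling an optimal solution of the fractional-colouring LP \eqref{eq:1}--\eqref{eq:3} by $1/\chi^f(G)$ gives a feasible $\lambda$ attaining $1/\chi^f(G)$, while any $\lambda$ with $\min_v\sum_{U\ni v}\lambda_U\ge\alpha$ becomes, after dividing by $\alpha$, a fractional colouring of value $1/\alpha$, forcing $\alpha\le 1/\chi^f(G)$. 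Hence the least $\rho$ making $\tfrac12\rho\,\one$ lie in $IP(G)$ is $2\bigl(1-\tfrac{1}{\chi^f(G)}\bigr)$.

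For the matching upper bound I would show $\bigl(2-\tfrac{2}{\chi^f(G)}\bigr)x\in IP(G)$ for every half-integral extreme point $x$ by constructing a random vertex cover. Writing $V_0,V_{1/2},V_1$ for the sets where $x_v$ equals $0,\tfrac12,1$, half-integrality forces $V_0$ to be independent with $N(V_0)\subseteq V_1$. I would draw a random independent set $U$ of the induced subgraph $G[V_{1/2}]$ from a scaled optimal fractional colouring, so $\prob{v\in U}\ge 1/\chi^f(G[V_{1/2}])\ge 1/\chi^f(G)$ by monotonicity of $\chi^f$ under subgraphs, and set $S=V_1\cup(V_{1/2}\setminus U)$. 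Using the adjacency restrictions one checks $S$ is always a vertex cover, and that $\prob{v\in S}$ is $1$ on $V_1$, $0$ on $V_0$, and at most $1-\tfrac{1}{\chi^f(G)}$ on $V_{1/2}$ — in every case at most $\bigl(2-\tfrac{2}{\chi^f(G)}\bigr)x_v$. Thus $\E{\one_S}\le\bigl(2-\tfrac{2}{\chi^f(G)}\bigr)x$, placing this scaled point in $IP(G)$.

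Combining the bounds gives the smallest feasible scaling factor as $2-\tfrac{2}{\chi^f(G)}$, and Lemma~\ref{goemanslem} then yields $\rho(G)=2-\tfrac{2}{\chi^f(G)}$. I expect the main obstacle to be the upper-bound construction: one must simultaneously respect the hard constraints forced by the zero coordinates (no vertex of $V_0$ may enter $S$, so every edge leaving $V_0$ must be covered from the $V_1$ side), absorb the $1$-coordinates where the scaled bound exceeds one, and use exactly the right independent-set distribution so that the $\tfrac12$-coordinates take up the $1-\tfrac{1}{\chi^f(G)}$ slack. Verifying that $S$ is a genuine vertex cover in all cases, alongside the reduction of the lower bound to the fractional-colouring LP, is precisely where the fractional chromatic number enters.
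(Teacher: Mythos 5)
Your proof is correct, and the upper bound is essentially the paper's own argument: half-integrality of extreme points, the partition $V_0,V_{1/2},V_1$, a distribution over independent sets of $G[V_{1/2}]$ obtained by scaling an optimal fractional colouring, complementation inside $V_{1/2}$ and union with $V_1$ to get vertex covers, and the coordinatewise comparison using $\chi^f(G[V_{1/2}])\le\chi^f(G)$. Your lower bound, however, takes a genuinely different route. The paper constructs an explicit worst-case cost function: it takes an optimal solution $z^*$ of the \emph{dual} of the fractional colouring LP, sets $c_v=z^*_v$, and argues directly that the LP optimum is at most $\chi^f(G)/2$ (via the all-halves point) while any integral cover costs at least $\chi^f(G)-1$ (since its complement is an independent set, whose dual weight is at most $1$); this uses strong LP duality but only the easy, one-sided reading of the gap. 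You instead invoke the full strength of Lemma~\ref{goemanslem} a second time: the gap equals the minimal scaling $\rho$ with $\rho x\in IP(G)$ for all $x\in P(G)$, so it suffices to compute the minimal scaling of the single point $\tfrac12\one$, which you do by the clean bijection between distributions over independent sets and (normalized) fractional colourings, giving $\min_\lambda\max_v\mathbf{Pr}[v\in S]=1-1/\chi^f(G)$ with no explicit duality. What each buys: the paper's version is constructive and self-contained (it exhibits the weighted instance achieving the gap, needing only strong duality for the colouring LP), whereas yours is more uniform --- both directions reduce to the same geometric question and identify $\tfrac12\one$ as the extremal point --- but it silently leans on the nontrivial converse direction of Lemma~\ref{goemanslem} (that a point $x$ with $\rho x\notin IP(G)$ actually forces some cost function to witness a gap of at least $\rho$), which is where the duality you avoided is hidden. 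Since the paper states that lemma as an equality, your use of it is legitimate, and your preliminary observation that $IP(G)$ is the dominant $\mathrm{conv}(\text{vertex covers})+\R_+^V$ (needed to translate membership into domination by a distribution) is a fact the paper also uses implicitly in its Claim~\ref{claim2}.
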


Before we give the proof of Theorem~\ref{thm:main1}, we make a few observations regarding the implications of the theorem. Since $\chi^f(G)=2$
if and only if $G$ is bipartite we obtain that $\rho(G)=1$, i.e., $P(G)$
is integral if and only if $G$ is bipartite. As observed earlier, this implication is also a corollary of the Hungarian method.
If $G$ is planar then $\chi^f(G)\leq \chi(G)\leq 4$ and we obtain that $\rho(G)\leq
\frac32$.

The following corollary shows that it is NP-hard to approximate the integrality gap and thus also ruling out a PTAS for approximating the integrality gap of the natural LP relaxation for the vertex cover problem.

\begin{corollary}
Computing the integrality gap of the vertex cover linear programming relaxation is NP-hard. Moreover, there exists a constant $c>1$ such that it is NP-hard to approximate the integrality gap within a factor of $c$.
\end{corollary}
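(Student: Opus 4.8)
The plan is to leverage Theorem~\ref{thm:main1}, which exhibits $\rho(G)$ as an explicit, strictly increasing, and efficiently invertible function of $\chi^f(G)$, namely $\rho(G) = 2 - 2/\chi^f(G)$ on the relevant range $\chi^f(G) \ge 2$ (equivalently $\rho(G) \in [1,2)$). Consequently, any algorithm computing $\rho(G)$ exactly also computes $\chi^f(G) = 2/(2 - \rho(G))$ exactly, and conversely. The first assertion of the corollary, exact NP-hardness, then reduces to the NP-hardness of computing the fractional chromatic number, which I would invoke as a known result (or, for a self-contained argument, establish by a reduction from an NP-hard coloring or independent-set problem).

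For the inapproximability I would push a \emph{gap} through the same bijection. Suppose it is NP-hard to distinguish graphs with $\chi^f(G) \le a$ from those with $\chi^f(G) \ge b$, where $2 \le a < b$ are \emph{absolute constants}. Applying the increasing map $t \mapsto 2 - 2/t$, the ``yes'' instances satisfy $\rho(G) \le 2 - 2/a =: \rho_a$ and the ``no'' instances satisfy $\rho(G) \ge 2 - 2/b =: \rho_b$, with $\rho_a < \rho_b$. Fixing any constant $c$ with $1 < c < \rho_b/\rho_a$, a $c$-approximation algorithm for $\rho$ outputs $\tilde\rho$ with $\rho(G) \le \tilde\rho \le c\,\rho(G)$; on a ``yes'' instance $\tilde\rho \le c\rho_a < \rho_b$, while on a ``no'' instance $\tilde\rho \ge \rho_b$, so the algorithm decides the gap problem. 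Hence approximating $\rho$ within $c$ is NP-hard.

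It is worth flagging why the constants $a,b$ must be bounded. If instead the hard $\chi^f$-gap had growing thresholds $a(n) \to \infty$ with a fixed ratio $b/a = \gamma$, then $\rho_a$ and $\rho_b$ would both tend to $2$ and $\rho_b/\rho_a \to 1$, so no fixed approximation factor for $\rho$ would survive. The monotone translation is therefore lossless only for instances of bounded fractional chromatic number, and this is the constraint the hardness input must respect.

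The main obstacle is consequently obtaining a bounded-value, constant-gap hardness for $\chi^f$ itself. The lower-bound side is easy from independent-set hardness, since $\chi^f(G) \ge |V(G)|/\alpha(G)$: the standard reduction from \textsc{Max-3Sat} yields instances on $3m$ vertices with $\alpha = m$ when satisfiable and $\alpha \le (\tfrac{7}{8} + \epsilon)m$ otherwise, so that $|V|/\alpha$ equals $3$ versus $\ge 3/(\tfrac{7}{8} + \epsilon)$. The difficulty is the matching \emph{upper} bound in the ``yes'' case, that is, producing a fractional coloring of value $\le a$, for which the inequality $\chi^f(G) \ge |V|/\alpha(G)$ does not suffice. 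I would close this either by working with \emph{vertex-transitive} instances, where $\chi^f(G) = |V(G)|/\alpha(G)$ holds with equality so that the independent-set gap transfers verbatim, or by invoking a known inapproximability result for $\chi^f$ that already produces instances of bounded optimum. Verifying or citing such a bounded-value gap is the crux; the remainder is the routine monotone translation described above.
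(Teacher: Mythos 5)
Your proposal follows the paper's proof essentially step for step: both arguments exploit the monotone bijection $\rho(G)=2-2/\chi^f(G)$ to transfer hardness of the fractional chromatic number, and the bounded-value constant-gap hardness you correctly identify as the crux is precisely what the paper imports from Khot's coloring inapproximability result (NP-hard to distinguish $\chi^f(G)\leq L$ from $\chi^f(G)\geq \beta L$ for absolute constants $L\geq 3$, $\beta>1$), after which the paper performs the same monotone translation to rule out a $c$-approximation with $c=\frac{\beta L-1}{\beta L-\beta}$. So the citation you hoped to invoke exists, and your argument, including the observation that the gap thresholds must be bounded constants for the ratio $\rho_b/\rho_a$ to stay bounded away from $1$, matches the paper's proof.
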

\begin{proof}
Since the fractional chromatic number and the chromatic number are within a factor of $O(\log n)$ of each other (see for example Theorem 64.13~\cite{Schrijver}),
it is NP-hard to approximate the fractional chromatic number to a factor better than $n^{\delta}$ for some $\delta>0$~\cite{bellare1994improved}.
As a corollary, we also obtain that it is NP-hard to compute the integrality gap of the vertex cover problem exactly.

We now show that it is even NP-hard to approximate the integrality gap within a factor of $c$ for some constant $c>1$. Khot~\cite{khot} implies that there exist constants $L\geq 3, \beta>1$ such that it is NP-hard to distinguish whether the fractional chromatic
number of a graph $G$ is at most $L$ or at least $\beta L$. 
But observe that when the fractional chromatic number of $G$ is at most $L$, then $\rho(G)\leq 2-\frac{2}{L}$ and when the fractional chromatic number of $G$ is more than $\beta L$, then we have $\rho(G)> 2-\frac{2}{\beta L}$. Therefore, it is NP-hard to distinguish
whether $\rho(G)\leq 2-\frac{2}{L}$ or $ \rho(G)> 2-\frac{2}{\beta L}$ and thus ruling out a $c$-approximation where $c=\frac{2-\frac{2}{\beta L}}{2-\frac{2}{L}}=\frac{\beta L-1}{\beta L-\beta }$.
\end{proof}
\\
We now give the proof of the main result.
\\
\begin{proofof}{ Theorem~\ref{thm:main1}}
 We first prove that $\rho(G)\leq 2-\frac{2}{\chi^f(G)}$.

 Let
$x^*$ be any extreme point of $P(G)$. Lemma~\ref{goemanslem} implies  that it is
 enough to show that $(2-\frac{2}{\chi^f(G)})\cdot x^*\in
 IP(G)$.
Nemhauser and Trotter~\cite{nt} imply that
$x^*_v\in\{0,\frac{1}{2},1\}$ for each $v\in V$.
 Let
\begin{eqnarray*}
 V_0&=&\left\{v\in V(G)|x^*_v=0\right\}\\
V_{\frac{1}{2}}&=&\left\{v\in V(G)|x^*_v=\frac{1}{2}\right\}\\
V_1&=&\left\{v\in V(G)|x^*_v=1\right\}
\end{eqnarray*}
 and
 let $H=G\left[V_{\frac{1}{2}}\right]$  be the graph
induced by $V_{\frac12}$.

Let $\chi^f(H)$ be the fractional chromatic number of $H$ and let
$y^*$ denote an optimum solution achieving the optimum value of
$\chi^f(H)$. 

\begin{claim} For each $U\in {\cal U}(H)$, $V_\frac12\setminus U$ is a vertex cover in $H$ and $\left(V_\frac12 \setminus U\right)\cup V_{1}$
is a vertex cover in $G$.\end{claim}
\begin{proof}
Since $U$ is an independent set in $H$, $V(H)\setminus U=V_{\frac12}\setminus U$ is
a vertex cover in $H$.

 Now, consider any edge $e\in E(G)$. If
$e\cap V_1\neq \phi$ then $e$ is covered by $V_{1}$. Else,
if both endpoints of $e$ are in $V_{\frac{1}{2}}$, then $e\in
E(H)$ and it is covered by $V_{\frac12}\setminus U$. Else, it has at least one endpoint
in $V_0$. But, then it must have the other endpoint in $V_1$ as
$x^*_u+x^*_v\geq 1$ where $e=\{u,v\}$.
\end{proof}

For any set $S\subseteq V$, we denote $\one_S\in \{0,1\}^V$ to be the indicator vector of set $S$. Thus,
$\one_{\left(V_{\frac12}\setminus U\right)\cup  V_1} \in IP(G)$ for each $U\in {\cal U}(H)$. Let
$\lambda_U=\frac{y^*_U}{\chi^f(H)}$. Clearly, $\sum_{U\in {\cal
U}(H)} \lambda_U=1$ since $\sum_{U\in {\cal U}(H)} y^*_U=\chi^f(H)$.

\begin{claim}\label{claim2}

$\left(2-\frac{2}{\chi^f(G)}\right)\cdot x^* \geq \sum_{U\in {\cal U}(H)}
\lambda_U\cdot \one_{\left(V_{\frac12}\setminus U\right) \cup  V_1}$.
\end{claim}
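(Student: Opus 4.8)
The plan is to verify the vector inequality of Claim~\ref{claim2} coordinate by coordinate, since both sides live in $\R^V$. I would fix a vertex $v\in V$, compare the $v$-th entries of the two sides, and split into three cases according to whether $v$ lies in $V_0$, $V_1$, or $V_{\frac12}$, which are exactly the cases $x^*_v=0$, $x^*_v=1$, and $x^*_v=\frac12$.

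First I would dispose of the two easy cases. If $v\in V_0$, then $x^*_v=0$ and $v$ belongs to none of the sets $\left(V_\frac12\setminus U\right)\cup V_1$, so both entries are zero. If $v\in V_1$, then $x^*_v=1$ and $v$ lies in $\left(V_\frac12\setminus U\right)\cup V_1$ for every $U$, so the right-hand entry equals $\sum_{U}\lambda_U=1$, while the left-hand entry equals $2-\frac{2}{\chi^f(G)}$; the desired inequality then reduces to $2-\frac{2}{\chi^f(G)}\geq 1$, i.e. $\chi^f(G)\geq 2$, which holds by the standing assumption $E(G)\neq\emptyset$.

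The substantive case is $v\in V_{\frac12}$, where $x^*_v=\frac12$ and the left-hand entry equals $1-\frac{1}{\chi^f(G)}$. Here $v$ contributes to the indicator $\one_{\left(V_\frac12\setminus U\right)\cup V_1}$ precisely when $v\notin U$, so the right-hand entry is $\sum_{U:\,v\notin U}\lambda_U=1-\sum_{U:\,v\in U}\lambda_U$. Using the fractional coloring constraint $\sum_{U:\,v\in U}y^*_U\geq 1$ together with $\lambda_U=y^*_U/\chi^f(H)$, I get $\sum_{U:\,v\in U}\lambda_U\geq \frac{1}{\chi^f(H)}$, so the right-hand entry is at most $1-\frac{1}{\chi^f(H)}$.

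It then remains to show $1-\frac{1}{\chi^f(G)}\geq 1-\frac{1}{\chi^f(H)}$, i.e. $\chi^f(H)\leq \chi^f(G)$. This monotonicity of the fractional chromatic number under passing to the induced subgraph $H=G\left[V_\frac12\right]$ is the one fact the argument truly leans on, and I expect it to be the crux of the proof, since the remaining cases are direct substitutions. It follows immediately, though: any optimal fractional coloring of $G$ restricts to a feasible fractional coloring of $H$ of value at most $\chi^f(G)$, so $\chi^f(H)\leq\chi^f(G)$, which closes the $V_{\frac12}$ case and hence establishes the claim.
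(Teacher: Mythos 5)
Your proposal is correct and follows essentially the same route as the paper's own proof: a coordinate-wise comparison split into the cases $v\in V_0$, $v\in V_1$, $v\in V_{\frac12}$, with the $V_{\frac12}$ case handled via the covering constraint $\sum_{U:\,v\in U}y^*_U\geq 1$ and the monotonicity $\chi^f(H)\leq \chi^f(G)$. The only difference is cosmetic: you justify the monotonicity by restricting a fractional coloring of $G$ to $H$, whereas the paper simply cites that $H$ is an induced subgraph.
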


\begin{proof}
Consider any component $v$ of the two vectors in the above
inequality. If $v\in V_0$, both the LHS and the RHS are $0$ and the
inequality holds. For any $v\in V_1$, the LHS is
$\left(2-\frac{2}{\chi^f(G)}\right)\geq 1$ while the RHS is at most $\sum_{U\in {\cal
U}(H)} \lambda_U=1$ and the inequality holds.

Now, let $v\in V_{\frac{1}{2}}$. The component in the LHS
corresponding to $v$ is
$$\left(2-\frac{2}{\chi^f(G)}\right)\cdot\frac{1}{2}=1-\frac{1}{\chi^f(G)}.$$
The RHS of the inequality is
\begin{eqnarray*}
\sum_{U\in {\cal U}(H):v\notin U}
\lambda_U &=&1-\sum_{U\in {\cal U}(H):v\in U} \lambda_U \\
&=&1-\sum_{U\in {\cal U}(H):v\in U} \frac{y^*_U}{\chi^f(H)}\\
&\leq&1-\frac{1}{\chi^f(H)}\\
&\leq& 1-\frac{1}{\chi^f(G)}
\end{eqnarray*}

where the first inequality holds as $\sum_{U\in {\cal U}(H):v\in U}
{y^*_U} \geq 1$ and the last inequality uses the fact that
$\chi^f(H)\leq \chi^f(G)$ as $H$ is a induced subgraph of $G$.
\end{proof}
Thus from Lemma~\ref{goemanslem}, we have $\rho(G)\leq
2-\frac{2}{\chi^f(G)}$.

Now, we prove that $\rho(G)\geq
2-\frac{2}{\chi^f(G)}$.
We show this by constructing a cost function for the vertices of
$G$ and showing that any integral solution is at least
$2-\frac{2}{\chi^f(G)}$ times the objective of the linear program. To
construct the cost function we again use the following linear program,
for the fractional chromatic number of graph $G$ defined in \eqref{eq:1}-\eqref{eq:3}.
Its dual is given as follows.

\begin{eqnarray*}
max & \sum_{v\in V(G)} z_v& \\
      s.t.&\sum_{v\in U} z_v \leq 1& \forall U \in {\cal U}(G),\\
      &z_v\geq 0 &\forall v\in V(G).
\end{eqnarray*}

Let $z^*$ be an optimum solution to the dual linear program. Strong duality implies that the
value of $z^*$ is $\chi^f(G)$. Now, consider the cost function
$c:V\rightarrow \R_+$, where $c_v=z^*_v$. Consider $x^*$ as a fractional vertex cover solution defined as $x^*_v=\frac{1}{2}$
for each $v\in V$. Observe that it is a feasible solution in $P(G)$. Hence, we
obtain that \[\min\{cx:x\in P(G)\}\leq \sum_{v\in
V}c_vx^*_v=\sum_{v\in V}z_v^*\cdot\frac{1}{2}=\frac{\chi^f(G)}{2}\].

Let $\one_I$ be an optimal vertex cover solution in $IP(G)$ under the cost
function $c$ and let $U=V\setminus I$.
Clearly, $U$ is an independent set. Hence, \[c\cdot \one_I=\sum_{v\in
I}c_v=\sum_{v\in V}z^*_v-\sum_{v\in U} z^*_v\geq
\chi^f(G)-1\] where the last inequality holds as $\sum_{v\in U}
z^*_v \leq 1$ from the fact that $U$ is an independent set. Hence,
 \[\rho(G)\geq
\frac{\min \{c^T x:x\in IP(G)\}}{\min \{c^T x: x\in P(G)\}}\geq
\frac{\chi^f(G)-1}{\chi^f(G)/2}=2-\frac{2}{\chi^f(G)}\]
 as claimed.
\end{proofof}

\section{Acknowledgement}

 This research is supported in part by National Science Foundation
grant CCF-1717947.


\begin{thebibliography}{9}

\bibitem{nt} G. L. Nemhauser and L. E. Trotter, \emph{Vertex packings: Structural properties and
algorithms}, Mathematical Programming, Volume 8, Issue  - 1, pp:
232 - 248, 1975.


\bibitem{goemansstrength} M. X. Goemans, \emph{Worst-case comparison of valid inequalities for the TSP},
 Mathematical Programming, Volume 69, Issue  - 1, pp: 335 - 349, 1995.

\bibitem{karp} Karp, Richard M., \emph{Reducibility among combinatorial problems}, Complexity of computer computations. Springer, Boston, MA, 1972. 85-103.

\bibitem{kuhn} Kuhn, Harold W., \emph{The Hungarian method for the assignment problem}, Naval Research Logistics (NRL) 2.1‐2 (1955): 83-97.

\bibitem{bellare1994improved} Bellare, Mihir and Sudan, Madhu, \emph{Improved non-approximability results}, {Proceedings of the twenty-sixth annual ACM symposium on Theory of computing}, {1994}.


\bibitem{khot} Khot, Subhash. \emph{Improved inapproximability results for maxclique, chromatic number and approximate graph coloring.} Proceedings of Foundations of Computer Science, 2001.

\bibitem{KhotR}Khot, Subhash and Oded Regev, \emph{Vertex cover might be hard to approximate to within $2-\epsilon$.} Journal of Computer and System Sciences 74.3 (2008): 335-349.

\bibitem{Schrijver} Schrijver, Alexander, \emph{Combinatorial optimization: polyhedra and efficiency.} Vol. 24. Springer Science and Business Media, 2003.

\bibitem{AFPS} Bazzi, Abbas and Fiorini, Samuel and Pokutta, Sebastian and Svensson, Ola, \emph{No Small Linear Program Approximates Vertex Cover within a Factor $2-\epsilon$.} Foundations of Computer Science (FOCS), 2015 IEEE 56th Annual Symposium on. IEEE, 2015.

\end{thebibliography}
\end{document}